\documentclass[3p,twocolumn,times,number]{elsarticle}

\usepackage{amssymb,amsmath,amsthm}
\usepackage{multirow}
\usepackage{array}
\usepackage{mathrsfs}
\usepackage{mathtools}
\newcommand{\mychi}{\raisebox{0pt}[1ex][1ex]{$\chi$}}
\usepackage{dutchcal}
\usepackage{enumitem}
\usepackage{graphicx}
\usepackage{subcaption}
\usepackage{algorithm,algpseudocode}
\usepackage{tikz}
\usetikzlibrary{positioning,arrows,calc,arrows.meta,patterns,patterns.meta,decorations.markings,decorations.pathreplacing,positioning}
\tikzset{every node/.style={circle}, 
	strike through/.append style={
		decoration={markings, mark=at position 0.5 with {
				\draw[-] ++ (-2pt,-2pt) -- (2pt,2pt);}
		},postaction={decorate}}
}

\setlist[itemize]{noitemsep, topsep=0pt}
\setlist[enumerate]{noitemsep, topsep=0pt}

\journal{Physica A: Statistical Mechanics and its Applications}

\DeclareMathOperator*{\EE}{\textit{E}}
\DeclareMathOperator*{\VV}{\textit{V}}
\DeclareMathOperator*{\E}{W}

\newcommand{\G}{\mathcal{G}}
\newcommand{\HH}{H}
\newcommand{\TT}{T}
\newcommand{\PP}{\mathcal{P}}

\newcommand{\C}{\mathscr{C}}

\newtheorem{theorem}{Theorem}
\newtheorem{lemma}{Lemma} 
\newtheorem{problem}{Problem} 
\newtheorem{corollary}{Corollary} 

\newtheorem{definition}{Definition} 
\newtheorem{proposition}{Proposition} 
\newtheorem{remark}{Remark}

\begin{document}

\begin{frontmatter}

\title{A (Purely) Graph-Theoretic Approach to Synchronization of Nonlinear Dynamical Networks}

\author[1]{Aandrew Baggio Sahaya Arokiadoss\corref{cor1}}
\ead{s.aandrewbaggio@protonmail.com}
\author[2]{G. Arunkumar}
\ead{garunkumar@iitm.ac.in}
\cortext[cor1]{Corresponding author}

\affiliation[1]{organization={Department of Electrical Engineering,
IIT Madras},
addressline={ESB-202},
city={Chennai},
postcode={600036},
state={Tamilnadu},
country={India}}

\affiliation[2]{organization={Department of Mathematics,
IIT Madras},
addressline={KCB1-554},
city={Chennai},
postcode={600036},
city={Chennai},
country={India}}

\begin{abstract}
Synchronizing nonlinear dynamical networks typically requires solving matrix inequalities or detailed system models, which fail for large networks. This paper offers a simple fix : a purely graph-theoretic framework using only a single Lipschitz-like bound on the dynamics. Coupling strengths are computed directly from the digraph, bypassing inequality solvers entirely. The method succeeds where existing approaches encounter infeasibility due to connectivity patterns. It examines only $n-1$ directed paths per strongly connected component versus $\frac{n(n-1)}{2}$ undirected paths before, achieving $O(n^3)$ complexity. Results show network connectivity can be exploited to synchronize a large class of nonlinear dynamical networks.
\end{abstract}

\begin{keyword}
 Coupling strength allocation \sep Cycle basis \sep Coupled dynamical system \sep Spectral graph theory \sep Synchronization
\end{keyword}

\end{frontmatter}

\section{Introduction}
While linear network control benefits from a well-developed theoretical foundation, nonlinear network synchronization is comparatively less mature. This is due to the diverse collective behaviours they may exhibit, even when the interconnection patterns are simple. In biology, for instance, each system in the network may represent a biochemical or genetic subsystem where the number of states by itself is huge and the dynamics are intricate \cite{mirsky2009model,Oda2005} and often unknown. In such settings, developing complete mathematical models is so challenging that model construction itself becomes a substantial task worthy of appreciation.

This difficulty has led researchers to investigate whether synchronization can be achieved primarily through knowledge of the network's connectivity, with only limited information about the underlying system dynamics. A major conceptual step in this direction was taken by Mochizuki and Fiedler in their two part series on the graph theoretic analysis of complex biochemical systems, \cite{fiedler2013dynamics,mochizuki2013dynamics}. The authors made two key assumptions: general dissipative behavior and the signs of the self-derivatives are known. They reformulated controlling the long-term behavior of a regulatory network as a classic problem of finding the minimal feedback vertex set. These works captured the idea that the essential behaviour of a network can be understood through its connectivity. Mochizuki's work offers valuable structural insights but lacks explicit guidance on selecting inputs or coupling strengths. 

Lyapunov's method addresses this gap effectively. It leads to matrix inequalities, which, when solved, provided concrete conditions for inputs and coupling gains to guarantee synchronization. On the other hand, works like \cite{belykh2006generalized,belykh2004connection,liu2015synchronization}, utilized spectral graph theory to reduce the resulting matrix inequality into a set of linear inequalities. These approaches identify suitable paths in the network's directed graph and use these paths to assign coupling strengths that meet the linear inequalities. However, these methods face infeasibility issues. Inequality solvers may fail to find solutions if some parameters are chosen carelessly. They also require computing $\frac{n(n-1)}{2}$ path to determine coupling strengths. This becomes a computational nightmare as size of the network increases.

Motivated by these insights, we propose our method, which is consistent with the philosophy in \cite{fiedler2013dynamics,mochizuki2013dynamics} and offers the following contributions :
\begin{enumerate}
    \item \textbf{Purely graph-theoretic framework:} The method relies on standard graph algorithms and eliminates the need for detailed system models or expensive numerical solvers.
    \item \textbf{Resolves infeasibility in existing methods:} The method guarantees the existence of coupling strengths for synchronizing networks obeying the stated assumptions.
    \item \textbf{Reduced computational complexity:} For each strongly connected component with $n'$ vertices, our method needs just $n'-1$ directed paths and $n'-m+1$ directed cycles (where $m$ is the number of arcs). This yields $O(n^3)$ overall complexity, scaling well to large networks with $n$ vertices.
\end{enumerate}
\section{Preliminaries}
\subsection{Graphs and Digraphs}
A \textit{digraph} is an ordered pair $(\VV, \EE)$. $\VV = \{v_1, \dots, v_n\}$ is its set of vertices and $\EE = \{e_1, \dots, e_m\} \subseteq \VV \times \VV$ is its set of arcs. An \textit{undirected graph} (or graph) is an ordered pair $(\VV, \EE)$, where $\VV$ is defined the same and $\EE$ is a multiset of unordered vertex pairs.  
Throughout this work, we consider \textit{simple digraphs}, i.e., digraphs without self-loops or parallel arcs. Graphs may include parallel edges, but a \textit{simple graph} contains neither. An arc in a digraph (or edge in a graph) can be referred either by the vertices it connects ($e_{ij}$) or by its label number $e_{k}$. We use the same notations to represent the weights associated with the arcs (or edges) i.e. $w_{k}$ or $w_{ij}$.
\begin{table}
\caption{Functions, terms and definitions}
\label{tab:Graph_definitions}
\begin{tabular}{lll}
\hline
\textbf{Term/Function} &&\textbf{Definition/Description}\\
\hline
$e_{ij} = (v_i, v_j)$ 
  &&Arc from $v_i$ (tail) to\\
  &&$v_j$ (head)\\
$e_{ij} = \{v_i, v_j\}$ 
  &&Edge with $v_i$ and $v_j$\\
  &&as its vertices\\
End vertices 
  &&Vertices of an arc\\
  &&or edge\\
Outgoing (Incoming) arc&&Arc with the\\
&&vertex as its tail\\
&&(or its head)\\
Incident arcs (edges)
  &&Arcs (edges) having the\\
  &&vertex as one of their\\  
  &&end vertices\\
Degree 
  &&Number of incident edges\\
Outdegree (Indegree)
  &&Number of outgoing\\
  &&(or incoming) arcs\\
Adjacent arcs 
  &&Arcs sharing at least\\
  &&one end vertex\\  
Source vertex 
  &&Vertex with $0$ indegree\\
Parallel arcs 
  &&Arcs with the same\\
  &&head and the same tail\\
Parallel edges 
  &&Edges with the same\\
  &&end vertices\\
Self-loops 
  &&Arcs with identical end\\
  &&vertices\\
Underlying graph 
  &&Graph formed by\\
  &&replacing each arc with\\
  &&an edge having the\\
  &&same end vertices\\
Traversal 
&&An ordered sequence of\\
&&vertices where each\\
&&consecutive pair\\
&&represents the endpoints\\
&&of an arc\\
$\VV(\G)$ 
  &&Vertex set of $\G$\\
$\EE(\G)$ 
  &&Arc (or edge) set of $\G$\\
$|\G|$ 
  &&Number of arcs in $\G$\\
$\HH$ 
  &&Maps an arc to its head\\
$\TT$ 
  &&Maps an arc to its tail\\
  \hline
\end{tabular}
\end{table}
A \textit{weighted digraph} is an ordered triplet $(\VV, \EE, \E)$, where $\E = [w_1, \dots, w_m]^\top \in \mathbb{R}^m$ is the arc-weight vector where $w_i$ is the weight associated with the arc $e_{i}\in \EE$.

\begin{definition}[Vertex Imbalance]
In a weighted digraph $\G$, the vertex imbalance $\text{d}_i$ of a vertex $v_i\in \VV(\G)$, is the difference between the sum of its outgoing and incoming arc weights. 
\[\text{d}_i = \qquad
\smashoperator{\sum_{\left\{j:\substack{(v_i,v_j) \in \EE(\G)}\right\}}} w_{ij}\quad
- \quad\smashoperator{\sum_{\left\{\substack{k:(v_k,v_i) \in \EE(\G)}\right\}}} w_{ki}\]
\end{definition}
\subsection{Undirected subgraphs}
A \textit{subgraph} $\mathcal{H}$ of a digraph (graph) $\G$ is a digraph (graph) such that $\VV(\mathcal{H}) \subseteq \VV(\G)$ and $\EE(\mathcal{H}) \subseteq \EE(\G)$, and is denoted as $\mathcal{H} \subseteq \G$.  
A \textit{path} $\PP$ is a graph with $\VV(\PP) = \{v_{k_0}, \dots, v_{k_\ell}\}$ and $\EE(\PP) = \{e_{k_1}, \dots, e_{k_\ell}\}$, where $e_{k_i} = \{v_{k_{i-1}}, v_{k_i}\}$. The \textit{end vertices} of a path graph are its vertices with degree 1. Two vertices in a graph are said to be \textit{connected} if the graph contains a path subgraph that has the vertices as its end vertices. A graph is \textit{connected} if every pair of its distinct vertices is connected.  
A \textit{cycle} is a connected graph in which every vertex has degree two. A \textit{tree} is a connected graph with no cycle subgraph, and a \textit{spanning tree} is a tree subgraph that includes all the vertices of the graph.  
An $n$ vertex \textit{complete graph}, denoted by $\mathbb{K}_n$, is a graph in which every vertex is adjacent to every other vertex. A \textit{star graph} is a graph with one central vertex adjacent to all the other vertices, while the remaining vertices are adjacent only to the central vertex.
\subsection{Directed subgraphs}
A \textit{directed path} $\PP$ satisfies $\VV(\PP) = \{v_{k_0}, \dots, v_{k_\ell}\}$ and $\EE(\PP) = \{e_{k_1}, \dots, e_{k_\ell}\}$ with $e_{k_i} = (v_{k_{i-1}}, v_{k_i})$. Its tail and head are $\TT(\PP) = \TT(e_{k_1})$ and $\HH(\PP) = \HH(e_{k_\ell})$ and we say that there is a directed path from $\TT(e_{k_1})$ to $\HH(e_{k_\ell})$. A digraph is connected if its underlying graph is connected. A \textit{directed cycle} is a digraph with at most one arc between any of its vertex pairs and its underlying graph is a cycle.  
A \textit{directed spanning tree} is a digraph that has a single source vertex with a unique directed path from it to every other vertex. The \textit{length} of a directed path or cycle equals its number of arcs. A digraph is said to be \textit{strongly connected} if, for every ordered pair of distinct vertices, there exists a directed path from the first to the second.
\subsection{Matrices and Vectors}
For $\text{X} \in \mathbb{R}^n$, $\text{X} > 0$ (resp. $\ge 0$) indicates all entries are positive (resp. nonnegative). The zero and all-one vectors are $\mathbf{0}_n$ and $\mathbf{1}_n$ respectively. For a weighted digraph $\G$ with $n$ vertices and $m$ arcs, the incidence ($\text{Q}$) is defined as :
\begin{equation*}
\begin{aligned}
[\text{Q}]_{ij} &=
\begin{cases}
\phantom{-}1, & \text{ if } v_i = \TT(e_j),\\
-1, & \text{ if } v_i = \HH(e_j),\\
\phantom{-}0, & \text{otherwise,}
\end{cases}
\end{aligned}
\end{equation*}
\begin{definition}
For a weighted digraph \(\G = (\VV, \EE, \E)\) with incidence matrix Q, the vertex imbalance vector D contains the vertex imbalance of each vertex as its entries and satisfies Q$\E$=D.
\end{definition}
Let $w_{ij}$ be the arc weight corresponding to the arc $(v_i,v_j)$. The corresponding Laplacian matrix (L) is given by :
\begin{equation*}
\begin{aligned}
[\text{L}]_{ij} &=
\begin{cases}
\phantom{\quad \sum}-w_{ji}, &\text{ if } i \neq j \text{ and } (j,i) \in \EE(\G),\\
\;\;\quad\smashoperator{\sum_{\left\{j:(v_j,v_i)\in \EE(\G)\right\}}}w_{ji}, &\text{ if } i=j,\\
\phantom{-\qquad \sum}0, & \text{otherwise.}
\end{cases}
\end{aligned}
\end{equation*}
$\text{L}_{\G}$ denotes the Laplacian of $\G$ with $m=|\EE(\G)|$. In unweighted cases, all the weights are assumed to be equal to $1$. Fixing a traversal for each directed cycle $\mathcal{H}$ in $\G$, results in a corresponding \textit{signed incidence vector} $\text{X} = [x_1, \dots, x_m]^\top$ :
\begin{equation*}
x_k =
\begin{cases}
\phantom{-}1, & \text{if } e_k \in \EE(\mathcal{H}) \text{ tail precedes head},\\
-1, & \text{if } e_k \in \EE(\mathcal{H}) \text{ head precedes tail},\\
\phantom{-}0, & \text{if } e_k \notin \EE(\mathcal{H}).
\end{cases}
\end{equation*}
 
The \textit{cycle space} $\C$ of $\G$ is the nullspace of Q i.e., $\C = \{\text{X} \in \mathbb{R}^m : \text{QX} = \mathbf{0}_n\}$. A \textit{cycle basis} is a set of linearly independent signed incidence vectors spanning $\C$ and each vector in it corresponds to a distinct directed cycle of $\G$.
\section{Problem Formulation}
We consider a network of $n$ identical, diffusively coupled dynamical system following \cite{liu2015synchronization} :

\begin{equation}
\dot{\boldsymbol{z}}_{i} = f(\boldsymbol{z}_{i}) + \smashoperator{\sum_{\left\{j:(v_j,v_i)\in\EE(\G)\right\}}} w_{ji}\,\text{P}\boldsymbol{z}_{j} + u_{i}
\label{eq:Dynamical Network Model}
\end{equation}
Here, $f$ is the system dynamics, $\boldsymbol{z}_i \in \mathbb{R}^d$ the state, $w_{ji}$ is the coupling strength with which system $j$ influences system $i$ and $\text{P}$ is a $\{0,1\}$ diagonal matrix indicating which state components are coupled. $u_{i}$ is the input given to the system $i$. The couplings and their strengths form a weighted digraph $\G$, termed the \textit{connectivity digraph}, where the arc weights represent coupling strengths. The synchronization manifold of~\eqref{eq:Dynamical Network Model} is defined as the set $\left\{\, \mathbf{z} \;\middle|\; \mathbf{z} = \mathbf{1}_n \otimes \boldsymbol{z},\ \boldsymbol{z} \in \mathbb{R}^d \,\right\}$. Here, $\otimes$ represents the Kronecker product. The network is said to be \emph{synchronized} if the states of all the systems in it asymptotically converge to the trajectories within this set. Global stability of this manifold is ensured under the following assumptions \cite{belykh2004connection}:
\begin{enumerate}
    \item For a choice of P, $\exists \, a>0$ such that $\forall\, \boldsymbol{x},\boldsymbol{y} \in \mathbb{R}^d$, and some $c>0$,
    \begin{equation}
    (\boldsymbol{x} - \boldsymbol{y})^{\top}[f(\boldsymbol{x}) - f(\boldsymbol{y}) - a\text{P}(\boldsymbol{x} - \boldsymbol{y})] \leq -c||\boldsymbol{x}-\boldsymbol{y}||^{2}
    \label{eq:Synchronization Assumption}
    \end{equation}
    This assumption is equivalent to contraction assumption \cite{delellis2010quad} implying monotonically decreasing Master Stability function (MSF) \cite{russo2009contraction}. Note that all oscillators do not obey this assumption with a prime example being x-coupled Rossler's attractors.
    \item The connectivity digraph contains a directed spanning tree.
\end{enumerate}

The objective is to find an arc-weight vector $\E$ such that 
\[
\lim_{t \to \infty} \|\boldsymbol{z}_i(t) - \boldsymbol{z}_{j}(t)\| = 0,\quad \forall\, i,j=1,\dots,n.
\]
The Lyapunov-based approach leads to the following matrix inequality \cite[Theorem~4.4]{wu2007synchronization}:
\begin{equation}
(\text{U} \otimes \text{V}) \big( \text{L}_{\G} \otimes (-\text{P}) - \text{I}_{n} \otimes \text{Y} \big) \preceq 0,
\label{eq:Synchronization Matrix Inequality}
\end{equation}
where U is an irreducible symmetric matrix with non-positive non-diagonal elements and V is a symmetric positive definite matrix. This equation was simplified in \cite{liu2013coupling} using the assumption 1 and several other assumptions  to be :
\begin{equation}
\text{L}_{\G} \text{L}_{\G_0} - a \text{L}_{\G_0} \succeq 0.
\label{eq:Graph inequality}
\end{equation}
Here,  L$_{\G_0}$ is the Laplacian matrix of a connected undirected graph with the same vertex set as $\G$. For $\text{L}_{\G_0}=\text{L}_{\mathbb{K}_n}$, the synchronization condition becomes
\begin{equation}
w_{ij}^{s} > \frac{a}{n}\left(\;\;\;\;\quad \smashoperator{\sum_{\left\{\PP \in \mathcal{S}: e_{ij} \in \EE(\PP)\right\}}} \Omega(\PP)\,\right),\;\forall\, i,j=1,\dots,n\: (i \neq j)
\label{eq:Synchronization Condition}
\end{equation}
Let $e_{ij}\in\EE(\G)$. Now, $w_{ij}^{s}=\dfrac{w_{ij}+w_{ji}}{2}$ if $e_{ji}\in\EE(\G)$, and $w_{ij}^{s}=\dfrac{w_{ij}}{2}$ otherwise. Also, $\mathcal{S}$ denotes the set of undirected paths in the underlying graph of $\G$, one for each unordered pair of distinct vertices and having the same as its end vertices. $\Omega(\PP)$ is the weight associated with each such path, defined as :
\[
\Omega(\PP) =
\begin{cases}
|\PP|\;\mychi\!\left(1+\dfrac{d_{\HH(\PP)}+d_{\TT(\PP)}}{2a}\right) & \text{, if }e_k \notin \EE(\G)\\
1+\dfrac{d_{\HH(e_k)}+d_{\TT(e_k)}}{2a}&  \text{, otherwise }
\end{cases}
\]
where $\mychi(z)=\max\{z,0\}$. Having outlined the method in \cite{liu2015synchronization}, we next identify its key issues.

\subsection{Infeasibility in Existing Method}
The generalized connection graph method (GCGM)~\cite{liu2015synchronization} employs spectral graph theory to determine coupling strengths that satisfy the synchronization condition~\eqref{eq:Synchronization Condition}. The key idea is that if a dynamical system receives inputs from multiple identical synchronized systems, these can be equivalently represented by a single synchronized copy with appropriate scaling. This is because the influencing terms of all the systems are the same~\cite[Sec.~IV]{liu2015synchronization}. Consequently, each strongly connected component (SCC) of the connectivity digraph can be collapsed into a single vertex. Extending this reasoning, all SCCs that have outgoing arcs to a given SCC can be merged into one auxiliary vertex (see Fig.~\ref{fig:connectivity_digraph}). This reduction allows the GCGM algorithm to process one SCC at a time, significantly lowering computational complexity. Any resulting parallel arcs are then combined into a single arc, and the arc weights are computed for the reduced digraph. This view leads us to the following remark :
\begin{remark}
\label{thm:SCC with 1 or 2 components}
GCGM deals with two types of digraphs :
\begin{enumerate}
\item strongly connected, or
\item composed of two SCCs, one being a single source vertex.
\end{enumerate}
\end{remark}
Nonetheless, the GCGM algorithm can end up with infeasible solutions for certain connectivity digraphs. To illustrate, consider a digraph composed of two SCCs, as characterized in Remark~\ref{thm:SCC with 1 or 2 components}. Suppose that the non-trivial SCC constitutes a directed cycle in which only a single vertex receives incoming arcs originating from the trivial SCC. Let $d_{1}$ be the imbalance of the source vertex, -$d_{1}$ be the imbalance of its adjacent vertex, and $0$ for all others. For this selection of vertex imbalances, the only possible choice of arc weight distribution is to keep all the arc weights in the non-trivial SCC to be equal. Using the shortest path between any two vertices and \eqref{eq:Synchronization Condition}, the synchronization inequality for $w_{01}$ becomes
\begin{equation}
w_{01} >
\begin{cases}
\dfrac{2a}{n}\!\left(\text{P}^{o}_{\text{sum}} + (\text{P}^{o}_{\text{sum}} - 1)\dfrac{w_{01}}{2a}\right), & \text{ if } n~\text{ is odd},\\[6pt]
\dfrac{2a}{n}\!\left(\text{P}^{e}_{\text{sum}} + (\text{P}^{e}_{\text{sum}} - 1)\dfrac{w_{01}}{2a}\right), & \text{ if } n~\text{ is even}.
\end{cases}
\end{equation}
Here, $\text{P}^{o}_{\text{sum}} = \dfrac{n^{2} + 2n - 3}{4}$ and $\text{P}^{e}_{\text{sum}} = \dfrac{n^{2} + 2n - 4}{4}$. 
For $n > 4$, the inequality yields only non-positive solutions, rendering positive arc weights infeasible. 
This behaviour arises from the fact that the numerator is a quadratic polynomial of $n$, whereas the denominator varies linearly with $n$. Feasibility can be restored by assigning vertex imbalances such that the weighted path-sums grow linearly (or slower) with $n$. This a critical issue is overlooked in \cite{liu2015synchronization}.

\begin{figure}
\centering
\begin{tikzpicture}[>=stealth,scale=0.3,
    redvertex/.style={circle, draw, fill=none, pattern color=black!50!red, minimum size=14pt, inner sep=0pt},
    bluevertex/.style={circle, draw, fill=blue!30, minimum size=14pt, inner sep=0pt},
    greenvertex/.style={circle, draw, fill=green!30, pattern=north west lines, pattern color=black!40!green, minimum size=14pt, inner sep=0pt},
    arrow/.style={thick},
    node distance=5mm]

\node[bluevertex] (v0) at (90:8cm) {$0$};

\foreach \i/\angle in {
        1/90,
        2/18,
        3/-54,
        4/-126,
        5/162}
    \node[redvertex] (v\i) at (\angle:4cm) {$\i$};

\draw[->,arrow] (v0) -- node[midway, right] {$e_{01}$} (v1);

\draw[->,arrow] (v1) -- node[midway, above right] {$e_{12}$} (v2);
\draw[->,arrow] (v2) -- node[midway, right] {$e_{23}$} (v3);
\draw[->,arrow] (v3) -- node[midway, below] {$e_{34}$} (v4);
\draw[->,arrow] (v4) -- node[midway, left] {$e_{45}$} (v5);
\draw[->,arrow] (v5) -- node[midway, above left] {$e_{51}$} (v1);
\end{tikzpicture}
\caption{A digraph with 6 vertices that causes feasibility issues in inequality solvers approach}
\label{fig:counterexample}
\end{figure}
\textit{Example} : Consider a digraph as shown in Fig.~\ref{fig:counterexample}. Let $w_{01}$, $w_{12}$, $w_{23}$, $w_{34}$, $w_{45}$ and $w_{5,1}$ be the arc weights of $e_{01}$, $e_{12}$, $e_{23}$, $e_{34}$, $e_{45}$ and $e_{51}$ respectively. Let us assume an arbitrary vertex imbalance vector $\left[\;d_{1}\,-d_{1}\,0\,0\,0\,0\;\right]$ with the arc-weight vector. For $w_{01}$, we get the following inequality :
 \[
 w_{01}>\dfrac{2a}{6}\left(\qquad\;\smashoperator{\sum_{\;\;\left\{\PP \in \mathcal{S}: e_{01} \in \EE(\PP)\right\}}} \Omega(\PP)\;\right)
 \]
If $\mathcal{S}$ is a set of shortest paths between any two vertices in the given digraph, we get P$^{e}_{sum}=11$.
\begin{equation*}
w_{01}>\frac{a}{3}\Big(11 + 10\frac{w_{01}}{2a}\Big)\implies w_{01}<-\frac{11a}{2}
\end{equation*}
This is impossible because $w_{01}$ must be strictly positive by construction. Hence the previous algorithm, when implemented with the simple equal-weight parameterization, admits no feasible positive solution for this digraph. This demonstrates the infeasibility phenomenon highlighted earlier.

\subsection{Increase in the number of path computations}
For a graph with $n$ vertices, the GCGM method requires computing $\dfrac{n(n-1)}{2}$ undirected paths. This quadratic growth in path computations becomes prohibitively expensive for large networks.

\subsection{Problem Statement}
Our objective now is to formulate an approach that removes the limitations stated and to synchronize any dynamical network satisfying the stated assumptions. To minimize the number of paths to be computed, we employ the following observation from \cite{liu2015synchronization} :
\begin{remark}
For any path $\PP$ in $\G$ whose end vertices have imbalance $\leq -a$, it follows that $\mychi(\omega(\PP)) = 0$.
\label{thm:Negative Vertex Imbalance}
\end{remark}As a result of the above remark, paths with  $\mychi = 0$ can be disregarded in evaluating \eqref{eq:Synchronization Condition}. Consequently, ensuring that most vertex imbalances remain $\leq -a$ decreases the number of contributing paths.
Now, we state the problem :
\begin{problem}
Given $a > 0$ and a weighted digraph $\G = (\VV, \EE, \E)$ that is either strongly connected or composed of a source vertex and a non-trivial SCC, find a $\E > 0$ such that:
\begin{enumerate}
\item The synchronization condition \eqref{eq:Synchronization Condition} holds and
\item All but one vertex in the non-trivial SCC have imbalance $\leq -a$.
\end{enumerate}
\end{problem}

\section{Coupling Strength Allocation}
In weighted digraphs restricted to positive arc weights, the number of vertices that can have negative vertex imbalances is at most $n-1$. This upper bound is due to the handshaking lemma.
\begin{lemma}[Handshaking lemma for weighted digraphs]
\begin{equation*} \sum_{v_{i}\in \VV(\G)} \Bigg(\quad\smashoperator{\sum_{\;\;\;\;\;\;\{j: (v_{i},v_{j})\in \EE(\G)\}}}w_{ij}\quad-\quad\smashoperator{\sum_{\;\;\;\;\;\;\;\;\;\;\{k: (v_{k},v_{i})\in \EE(\G)\}}}w_{ki}\quad\Bigg)=0 \iff \smashoperator{\sum_{v_{i}\in \VV(\G)}} d_{i} = 0
\end{equation*}
\label{lem:handshake}
\end{lemma}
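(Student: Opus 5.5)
The plan is to prove the left-hand identity by double counting over arcs. The equivalence with $\sum_i d_i = 0$ then follows at once from the Definition of vertex imbalance, since the bracketed term is exactly $d_i$.

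First, I rewrite the double sum on the left as a sum over arcs. Every arc $e_k = (v_p, v_q) \in \EE(\G)$ shows up in exactly two of the inner sums. It appears once with sign $+$, in the outgoing sum at its tail $v_p = \TT(e_k)$. It appears once with sign $-$, in the incoming sum at its head $v_q = \HH(e_k)$. Because $\G$ is simple, there are no self-loops, so $v_p \neq v_q$ and these two occurrences sit at distinct vertices; even with a self-loop they would still cancel. Swapping the order of summation gives
\begin{equation*}
\sum_{v_i \in \VV(\G)} d_i = \sum_{e_k \in \EE(\G)} \big( w_k - w_k \big) = 0 .
\end{equation*}

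An equivalent matrix version uses the Definition of the vertex imbalance vector, $\text{D} = \text{Q}\E$. Each column of the incidence matrix $\text{Q}$ has exactly one entry $+1$ (at the tail) and one entry $-1$ (at the head). Hence $\mathbf{1}_n^\top \text{Q} = \mathbf{0}_m^\top$, and so
\begin{equation*}
\mathbf{1}_n^\top \text{D} = \mathbf{1}_n^\top \text{Q}\E = 0 .
\end{equation*}
I would present this version because it is a single line and ties directly into the later cycle-space arguments.

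Both directions of the ``$\iff$'' are immediate, since the two sides are the same expression by definition of $d_i$; the substance is that the common value is $0$. There is no real obstacle here. The only care needed is to make sure each arc is counted once at its tail and once at its head, which is exactly the column structure of $\text{Q}$.

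Finally, I would note the consequence the paper uses: if $\E > 0$, the imbalances sum to zero, so they cannot all be negative. Hence at most $n-1$ vertices can have negative imbalance.
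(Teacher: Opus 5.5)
Your proof is correct. The paper gives no proof of this lemma; it states it as the standard handshaking fact and then uses it to bound the number of negative imbalances by $n-1$. So there is no competing route to compare against, and your argument fills that omission. The argument is the standard one: each arc contributes $+w_k$ at its tail and $-w_k$ at its head. Equivalently, $\mathbf{1}_n^\top \text{Q} = \mathbf{0}_m^\top$ combined with the paper's identity $\text{Q}\E = \text{D}$ gives the result. You are also right that the ``$\iff$'' is vacuous, because the bracketed term is $d_i$ by definition.

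Two small remarks:
\begin{enumerate}
\item The matrix version genuinely relies on the absence of self-loops. Under the paper's definition of Q, a self-loop column would receive a single $+1$. Your arc-sum version does not need this assumption.
\item Your final consequence does not need $\E > 0$. A zero sum by itself rules out all $n$ imbalances being negative; positivity is only the setting in which the paper applies the lemma.
\end{enumerate}
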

We define an appropriate arc-weight vector to achieve this upper bound for a weighted digraph
\begin{definition}[Negative Imbalance Arc-weight vector]
A \textit{negative imbalance arc-weight vector} is an arc weight vector with positive entries that results in negative vertex imbalances at $n-1$ vertices of an $n$-vertex digraph.
\end{definition}
Let X$^{-}$ denote the negative imbalance arc–weight vector of the weighted digraph $\G$. Owing to the linearity of the relation QX$^{-} =$D , scaling X$^{-}$ appropriately ensures that all but one negative vertex imbalances become $\leq -a$. Consequently, the value of the function $\chi$ becomes zero for all paths except those originating from the single vertex with positive vertex imbalance. Hence, only $n-1$ directed paths need to be computed. The problem therefore reduces to determining a suitable negative imbalance arc–weight vector for $\G$.

To address this systematically, we split the arc–weight allocation into intra-SCC and inter-SCC allocation. For intra-SCC arcs, that is, arcs whose end vertices lie in the same SCC, we show that a negative imbalance arc–weight vector can always be constructed. For inter-SCC allocation, we find X$^{-}$ for the non-trivial SCC. We use this vector and the Laplacian matrix of a star graph. This to assign weights to arcs whose head vertex lies in the SCC under consideration. Finally, we prove that the resulting arc weights satisfy condition~\eqref{eq:Graph inequality}.

\subsection{Intra-SCC Allocation}
We now establish that any strongly connected weighted digraph admits a negative imbalance arc-weight vector.

\begin{proposition}
\label{thm:maximum negative vertex imbalance vector}
For any directed $\PP$ of a weighted digraph $\G$, there exists an arc-weight vector that can result in the vertex imbalances satisfying the following : 
\begin{enumerate}
    \item $\text{d}_i > 0$ if $v_i = \TT(\PP)$,
    \item $\text{d}_i < 0$ for $v_i \in \VV(\PP) \setminus \{\TT(\PP)\}$,
    \item $\text{d}_i = 0$ for $v_i \notin \VV(\PP)$.
\end{enumerate}
\end{proposition}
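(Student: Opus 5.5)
Let $\PP$ be a directed path with vertices $v_{k_0},\dots,v_{k_\ell}$ and arcs $e_{k_i}=(v_{k_{i-1}},v_{k_i})$, so that $\TT(\PP)=v_{k_0}$. The plan is to construct the arc-weight vector explicitly. Every arc not in $\EE(\PP)$ gets weight zero. The arcs of $\PP$ get strictly decreasing positive weights, for instance
\[
w_{k_i} = \ell - i + 1, \qquad i=1,\dots,\ell .
\]
Since imbalance is linear in the weights ($\text{Q}\E = \text{D}$), it suffices to compute $\text{Q}$ applied to this vector, and only arcs of $\PP$ contribute.

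The verification goes vertex by vertex.
\begin{enumerate}
\item For $v_{k_0}=\TT(\PP)$: the only arc of $\PP$ incident to it is the outgoing arc $e_{k_1}$. Hence $\text{d}_{k_0}=w_{k_1}=\ell>0$.
\item For an internal vertex $v_{k_i}$ with $0<i<\ell$: it has exactly one incoming path arc $e_{k_i}$ and one outgoing path arc $e_{k_{i+1}}$. Hence $\text{d}_{k_i}=w_{k_{i+1}}-w_{k_i}=-1<0$.
\item For the head $v_{k_\ell}$: it has only the incoming arc $e_{k_\ell}$. Hence $\text{d}_{k_\ell}=-w_{k_\ell}=-1<0$.
\item For $v\notin\VV(\PP)$: every arc with nonzero weight has both ends in $\VV(\PP)$, so $\text{d}=0$.
\end{enumerate}
The values sum to $\ell-(\ell-1)-1=0$, consistent with Lemma~\ref{lem:handshake}.

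The main point to check is that a directed path has distinct vertices. This holds because its underlying graph is a path, so each vertex of $\PP$ has at most one incoming and one outgoing arc of $\PP$, which is what the case analysis above uses.

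A second, more delicate point is that the statement says ``an arc-weight vector'' without requiring positivity. Zero weights off the path are therefore admissible here. Strict positivity on all arcs of $\G$ would presumably be handled later by combining such path vectors, e.g.\ summed over a spanning structure together with positive circulations from a cycle basis. Adding a circulation leaves the imbalance vector unchanged, since circulations lie in the null space of $\text{Q}$. I would not need that for this proposition.

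Alternatively, the decreasing weights can be replaced by any choice $w_{k_1}>w_{k_2}>\dots>w_{k_\ell}>0$, which gives the same sign pattern.
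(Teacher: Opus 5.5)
Your proof is correct and uses the same construction as the paper: zero weight off $\PP$, strictly monotone positive weights along $\PP$, and a direct vertex-by-vertex computation of the imbalances. Your decreasing ordering $w_{k_1}>\dots>w_{k_\ell}>0$ is the right one; the paper's proof writes $0<w_1<\dots<w_\ell$, which would make the internal imbalances $w_{i+1}-w_i$ positive, but its intra-SCC algorithm and worked example use the decreasing weights $\ell-i+1$ as you do.
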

\begin{proof}
Consider a directed path $\PP$ of length $\ell$ with arc sequence 
$e_1, \dots, e_\ell$. Assign weights such that $0 < w_1 < w_2 < \dots < w_\ell,$
where $w_i$ is the weight of $e_i$. Assign weight $0$ to every arc not belonging to $\PP$. By construction, only the vertices on $\PP$ have nonzero imbalance. This gives the vertex imbalance vector : 
\begin{equation*}
   \text{D} = [\, w_1,\; w_2 - w_1, \dots, -w_\ell,\; 0,\dots,0\,]^{\top} 
\end{equation*} The first entry is positive and the next $\ell-1$ entries are negative, thus proving the theorem.
\end{proof}

\begin{corollary}
\label{thm:negative imbalance arc-weight vector for strongly connected digraphs}
Every strongly connected digraph admits a negative imbalance arc-weight vector.
\end{corollary}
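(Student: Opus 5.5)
The idea is to build the arc-weight vector in two layers. The first layer produces the required sign pattern of imbalances but may leave some arcs with weight zero. The second layer makes every arc weight strictly positive without changing any imbalance. Fix an arbitrary root vertex $v_r\in\VV(\G)$.

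\textbf{Step 1 (sign pattern via directed paths).} Since $\G$ is strongly connected, for every $v_i\neq v_r$ there is a directed path $\PP_i$ with $\TT(\PP_i)=v_r$ and $\HH(\PP_i)=v_i$. For each such path, Proposition~\ref{thm:maximum negative vertex imbalance vector} gives an arc-weight vector $\text{X}_i\ge 0$, supported on $\EE(\PP_i)$, whose imbalance vector $\text{D}_i=\text{Q}\text{X}_i$ has the following signs: positive at $v_r$, negative on $\VV(\PP_i)\setminus\{v_r\}$, and zero elsewhere. Set $\text{X}_{\mathrm{p}}=\sum_{i\neq r}\text{X}_i$. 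By linearity, $\text{Q}\text{X}_{\mathrm{p}}=\sum_i \text{D}_i$. The entry at $v_r$ is a sum of positive terms, so it is positive. For any $v_j\neq v_r$, every summand is nonpositive at $v_j$, and the summand $\text{D}_j$ is strictly negative there because $v_j=\HH(\PP_j)$. Hence exactly $n-1$ vertices have negative imbalance. The weights $\text{X}_{\mathrm{p}}$ are nonnegative, but arcs lying on no $\PP_i$ still have weight $0$.

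\textbf{Step 2 (positivity via a circulation).} Strong connectivity implies that every arc $e_k=(v_i,v_j)$ lies on a directed cycle. Such a cycle is obtained by appending to $e_k$ a directed path from $v_j$ back to $v_i$. Orient each of these cycles along its arcs, so that its signed incidence vector $\text{C}_k$ is a $\{0,1\}$-vector with $\text{Q}\text{C}_k=\mathbf{0}_n$; that is, $\text{C}_k$ lies in the cycle space $\C$. Then $\text{X}_{\mathrm{c}}=\sum_k \text{C}_k$ has every entry at least $1$ and still satisfies $\text{Q}\text{X}_{\mathrm{c}}=\mathbf{0}_n$. Define $\text{X}^{-}=\text{X}_{\mathrm{p}}+\text{X}_{\mathrm{c}}$. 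It is entrywise positive, and $\text{Q}\text{X}^{-}=\text{Q}\text{X}_{\mathrm{p}}$, so the imbalance pattern from Step 1 is preserved. Therefore $\text{X}^{-}$ is a negative imbalance arc-weight vector. This is consistent with Lemma~\ref{lem:handshake}, since the single positive imbalance balances the $n-1$ negative ones.

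\textbf{Main obstacle.} The delicate point is that Proposition~\ref{thm:maximum negative vertex imbalance vector} assigns weight $0$ off the path, which violates the positivity requirement in the definition. The circulation trick in Step 2 resolves this, and it relies essentially on strong connectivity: every arc must lie on a directed cycle. A secondary check is that adding the path vectors cannot cancel a negative imbalance. This cannot happen because each $\text{D}_i$ is nonpositive at every vertex other than $v_r$. It can also be noted that a directed out-tree rooted at $v_r$ suffices for the paths, which yields the $n-1$ directed paths mentioned earlier.
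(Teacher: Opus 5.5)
Your Step 1 is essentially the paper's proof. The paper roots a directed spanning tree at $v_r$, applies Proposition~\ref{thm:maximum negative vertex imbalance vector} to each of the $n-1$ tree paths, and sums. It uses, as you do, that no summand is positive away from $v_r$, while the path ending at $v_j$ forces $\text{d}_j<0$.

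The paper stops there, so its proof never verifies the ``positive entries'' clause in the definition of a negative imbalance arc-weight vector. The summed vector vanishes on every arc outside the tree; in the worked example, $\text{W}^{-}=a[\,3,\,1,\,0\,]^\top$. Positivity is only restored later, outside the corollary, by adding the directed-ear circulation $\text{X}^{0}$ in Algorithm~\ref{alg:intra-scc}. The paper says this vector's ``only purpose is to ensure that arc $(3,1)$ has a nonzero weight.'' Your Step 2 moves that same cycle-space device into the proof, so your argument establishes the corollary as literally stated.

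The two circulation choices trade off differently. One directed cycle per arc, as you use, needs nothing beyond strong connectivity. The paper's directed-ear basis gives the same full support with only $m-n+1$ cycles. It also lets a single circulation serve both for positivity and, after scaling by $\Delta w$, for the synchronization inequality.

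One caution. The facts $\text{X}_i\ge 0$ and support on $\EE(\PP_i)$ come from the Proposition's construction, not from its statement. That construction must use \emph{decreasing} weights $w_1>w_2>\dots>w_\ell>0$ along the path, as Algorithm~\ref{alg:intra-scc} does with $\ell-i+1$. The printed choice $0<w_1<\dots<w_\ell$ gives interior imbalances $w_{i+1}-w_i>0$. Alternatively, you can drop the need for $\text{X}_{\mathrm{p}}\ge 0$ entirely by taking $\text{X}_{\mathrm{p}}+t\,\text{X}_{\mathrm{c}}$ with $t$ large.
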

\begin{proof}

Choose an arbitrary vertex $v_r$ as the root and construct a directed spanning tree. 
For each vertex $v_j \neq v_r$, there exists a directed path from $v_r$ to $v_j$.

By Proposition~\ref{thm:maximum negative vertex imbalance vector}, 
each such path defines a vector X$^{-}_j$ that produces a vertex–imbalance vector 
satisfying
\[
\text{d}_r > 0, 
\qquad 
\text{d}_j < 0, 
\qquad 
\text{and} 
\qquad 
\text{d}_k \le 0 \quad \forall \, k \neq r,j.
\]
Summing these vectors over all $j \neq r$ preserves $\text{d}_r > 0$, 
since each X$^{-}_j$ contributes positively at $v_r$. 
For every $i \neq r$, at least one term makes $\text{d}_i < 0$, 
and no term contributes positively at $v_i$. 
Hence, in the resulting vector,
\[
\text{d}_r > 0 
\quad \text{and} \quad 
\text{d}_i < 0 \quad \forall \, \neq r.
\]
This proves the claim.
\end{proof}

We use a weight difference of $1$ for constructing the vector for each directed path in proposition~\eqref{thm:maximum negative vertex imbalance vector}. We scale the summation of these vectors i.e., X$^{-}$ by $a$ to ensure any path not starting at $v_r$ has a negative weight i.e., $\chi < 0$. Here, X$^{-}$ is a negative imbalance arc-weight vector and it changes the synchronization condition in \eqref{eq:Synchronization Condition} to be \begin{equation}
w_{k}^{s} \ge \frac{2a}{n}
\left(1 + \sum_{\PP \in \mathcal{S}^{'}} |\PP|\right)
\left(\sum_{\PP \in \mathcal{S}^{'}} |\PP|\right)
\label{eq:Modified Synchronization Condition}
\end{equation}
Here, $\mathcal{S}'$ denotes a set of directed paths, one for each vertex other than $v_r$. 
Each path starts at $v_r$ and ends at a distinct vertex.

Although the number of paths to be computed is reduced, the synchronization condition~\eqref{eq:Modified Synchronization Condition} is not yet satisfied. 
To enforce it, the arc weights must be increased till all the inequalities are satisfied without altering the vertex imbalances. 
This is achieved by adding a vector from the cycle space of the SCC (X$^{0} \in \C$). 
Since X$^{0}$ lies in the cycle space, the right-hand side of~\eqref{eq:Synchronization Condition} remains unchanged, while the left-hand side can be increased as much as we need.

We construct X$^{0}$ as the sum of directed-ear basis vectors. A directed-ear basis is a cycle basis obtained via directed-ear decomposition~\cite[Section~2.1]{loebl2001some}. Each directed ear is a directed path whose internal vertices have in-degree and out-degree equal to $1$. Moreover, the corresponding signed incidence vectors contain non-negative entries under at least one traversal orientation. If the considered SCC is a source SCC, we scale X$^{0}$ by a factor $\Delta w$ that is larger than the highest value on the right hand side of \eqref{eq:Modified Synchronization Condition}.

\begin{equation}
\Delta w = \frac{2a}{n}
\left(1 + \sum_{\PP \in \mathcal{S}^{'}} |\PP|\right)
\left(\sum_{\PP \in \mathcal{S}^{'}} |\PP|\right)
\label{eq:X0scaling}
\end{equation}
For non-source SCCs, the scaling factor becomes $1$ since the synchronization condition is taken care of by the inter-SCC incoming arc weights. We define $\text{W}^{0}=\Delta w\,\text{X}^{0}$ to be the scaled vector and add it to X$^{-}$ to yield the desired arc weight vector for the SCC considered.
\[
\text{W}_\text{sync} =\text{W}^{-} + \text{W}^{0},
\]
We detail this procedure in algorithm~\ref{alg:intra-scc}.

\begin{algorithm}[h]

\caption{Intra-SCC Arc Weight Allocation}
\label{alg:intra-scc}
\begin{algorithmic}
\State \textbf{Input:} $\G_{k}$ (SCC of $\G$), $v_r$ (root vertex ), $a>0$ (system parameter) and $\Delta w$ (scaling factor).
\State \textbf{Output:} Arc-weight vector $\E_{\text{sync}}$.
\State \textbf{Step 2:} $\E^{-} \gets \mathbf{0}_{| \EE(\G_{k})|}$ and $n\gets |\VV(\G_{k})|-1$
\State \textbf{Step 2:} Assign $v_{r}$ as vertex 1 i.e. $v_{1} \gets v$.
\State \textbf{Step 3:} $\mathcal{S} \gets \{\mathcal{P}_{j}|\mathcal{P}_{j} \text{ starts at }v_{1}\text{ and ends at } v_{j}\}$ (one path per vertex).
\State \textbf{Step 5:} Update the weights of the arcs in $\mathcal{P}_{j}$ : \\ \hspace{1.126cm}$\E^{-}(e_{i}) \gets \E^{-}(e_i) + (\ell - i + 1)$.\\ \hspace{1.124cm} ($i$ is the path sequence number).
\State \textbf{Step 6:} $n \gets n-1$.
\State \textbf{Step 7:} \textbf{if} $n>0$ return to Step 4.
\State \textbf{Step 8:} Scale the vector: $\E^{-} \gets a\E^{-}$.
\State \textbf{Step 9:} Compute directed ear basis for $\G_{k}$:\\ \hspace{1.127cm}$\mathcal{E}\gets \{\text{X}_1,\dots,\text{X}_k\}$
of $S$.
\State \textbf{Step 10:} $\E^{0} \gets$ $\Delta w\sum_{ \text{X}\in \mathcal{E}} \text{X}$.
\State \textbf{Step 11:} $\E_{\text{sync}} \gets \E^{-} + \E^{0}$
\State \textbf{Step 12:} Return $\E_{\text{sync}}$.
\end{algorithmic}
\end{algorithm}

\subsection{Inter-SCC Arc Allocation}
Due to the remark~\eqref{thm:SCC with 1 or 2 components}, it is sufficient to establish the synchronization condition for a digraph consisting of two SCCs, with one being the a source vertex. The weight assigned to each arc from this source vertex is then distributed evenly among the corresponding inter-SCC incoming arcs that it represents.
 
 We propose the following theorem to ensure that the synchronization condition~\eqref{eq:Graph inequality} is satisfied for the reduced digraph as mentioned in Remark~\ref{thm:SCC with 1 or 2 components} :
\begin{theorem}
\label{thm:Arc weights for digraphs with trivial SCCs}
A weighted connectivity digraph with exactly two strongly connected components, one of which is a source vertex, admits a strictly positive arc - weight vector satisfying the synchronization condition in \eqref{eq:Graph inequality}.
\end{theorem}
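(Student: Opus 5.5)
We must show that for a digraph $\G$ with source vertex $v_0$ and a non-trivial SCC $\G_1$ on $n-1$ vertices, there is $\E>0$ with $\text{L}_{\G}\text{L}_{\G_0}-a\text{L}_{\G_0}\succeq 0$ for some connected undirected $\G_0$ on $\VV(\G)$. Our choice is $\G_0$ equal to the star graph centred at $v_0$. Then $\text{L}_{\G_0}=\begin{pmatrix} n-1 & -\mathbf{1}^\top\\ -\mathbf{1} & \text{I}_{n-1}\end{pmatrix}$, and its range is the subspace orthogonal to $\mathbf{1}_n$. The natural strategy is to restrict the quadratic form to this subspace, where $\text{L}_{\G_0}$ acts invertibly, and to reduce everything to a condition on the Laplacian of the SCC together with the incoming arcs from $v_0$.

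\textbf{Step 1: block structure of $\text{L}_{\G}$.} Since $v_0$ is a source, the row of $\text{L}_{\G}$ for $v_0$ vanishes. Write $b\ge 0$ for the vector of weights $w_{0i}$ on the arcs leaving $v_0$. Then
\[
\text{L}_{\G}=\begin{pmatrix}0&\mathbf{0}^\top\\ -b & \text{L}_1+\text{diag}(b)\end{pmatrix},
\]
where $\text{L}_1$ is the Laplacian of $\G_1$. Because $\text{L}_{\G}\mathbf{1}=0$, we have $\text{L}_{\G}\text{L}_{\G_0}=\text{L}_{\G}(n\text{I}-\mathbf{1}\mathbf{1}^\top)$ restricted suitably. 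Parametrize $x\perp\mathbf{1}$ by $x=(x_0,y)$ with $x_0=-\mathbf{1}^\top y$, and compute $x^\top(\text{L}_{\G}\text{L}_{\G_0}-a\text{L}_{\G_0})x$. This should reduce to the symmetric part of $\text{M}:=\text{L}_1+\text{diag}(b)-a\text{I}$, up to a rank-one term coming from $x_0$. The target is therefore: choose weights such that the symmetric part of $\text{L}_1+\text{diag}(b)$ dominates $a\text{I}$ plus this rank-one correction.

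\textbf{Step 2: choose the intra-SCC weights via Algorithm~\ref{alg:intra-scc}.} Take $\E_{\text{sync}}=\E^{-}+\E^{0}$ on $\G_1$, with scaling $1$ as the paper prescribes for non-source SCCs. Corollary~\ref{thm:negative imbalance arc-weight vector for strongly connected digraphs} gives positive weights with imbalance $d_r>0$ at the root and $d_i\le -a$ elsewhere. The key identity is $\tfrac12(\text{L}_1+\text{L}_1^\top)=\text{L}_1^{s}+\tfrac12\text{diag}(\text{D})$ up to sign, where $\text{L}_1^{s}$ is the Laplacian of the symmetrized weights $w^s$ and D is the imbalance vector. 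Indeed, the diagonal of $\text{L}_1$ is the in-weight, and in-weight equals $w^s$-degree minus $d_i/2$. So the symmetric part equals $\text{L}_1^s\succeq 0$ plus a diagonal term. With the imbalance sign convention fixed so that the term is $-\tfrac12\text{diag}(\text{D})$, the diagonal entries are $\ge a/2$ at all non-root vertices and $-d_r/2$ at the root.

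\textbf{Step 3: fix the deficient directions with $b$.} Choose $b$ strictly positive on every arc from $v_0$; at least one arc into $\G_1$ exists, and the paper distributes the weight evenly over the represented arcs. Let $\beta=\min_i b_i$ and take it large, of order $d_r+2a+$ the rank-one correction. The point is that the arcs from $v_0$ contribute $\text{diag}(b)$ to the symmetric part. Together with $\text{L}_1^s\succeq0$ and the diagonal from Step 2, this makes $\text{M}^{s}\succeq$ a multiple of I large enough to absorb both $a\text{I}$ and the rank-one term. We use the crude bound $(\mathbf{1}^\top y)^2\le (n-1)\|y\|^2$ to control that term. If $b$ is supported only at the vertices receiving arcs from $v_0$, we instead use connectivity of $\G_1$: a Fiedler-type lower bound $\lambda_{\min}(\text{L}_1^s+\text{diag}(b))>0$, scaled by adding $\Delta w\,\text{X}^0$ from the cycle space. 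This scaling leaves D unchanged and enlarges $\text{L}_1^s$, and together with $b$ it gives any prescribed positive lower bound.

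\textbf{Main obstacle.} The hard step is the exact algebra in Step 1: showing that the cross term between $x_0$ and $y$, which carries $b$, does not destroy positivity. The plan is to expand with $x_0=-\mathbf{1}^\top y$ and bound the cross term $(\mathbf{1}^\top y)(b^\top y)$ by Cauchy--Schwarz. We then pick the scaling $\Delta w$ of $\text{X}^0$ large relative to $\max b$, and afterwards $\min b$ large relative to $a$ and $d_r$, so that both bounds can be met simultaneously. We expect this ordering to close, because the cycle-space scaling is free.
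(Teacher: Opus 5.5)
There is a genuine gap. Step~1 is the whole proof, and you only announce it (``should reduce to''). The identity you invoke, $\text{L}_{\G}\text{L}_{\G_0}=\text{L}_{\G}(n\text{I}-\mathbf{1}\mathbf{1}^\top)$, holds for $\mathbb{K}_n$, not for the star.

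The missing step is the substitution $z_i=x_i-x_0$ for $i\ge 1$. For the star, $\text{L}_{\G_0}x=(-\mathbf{1}^\top z,\,z)$ and $x^\top\text{L}_{\G_0}x=\|z\|^2$. Your block form gives $\text{L}_{\G}x=(0,\,(\text{L}_1+\text{diag}(b))z)$. The paper actually proves the symmetric form $\tfrac12(\text{L}_{\G_0}\text{L}_{\G}+\text{L}_{\G}^\top\text{L}_{\G_0})\succeq a\text{L}_{\G_0}$; note the order of the factors. In that form the inequality is \emph{exactly} $z^\top(\text{L}_1+\text{diag}(b))z\ge a\|z\|^2$ for all $z\in\mathbb{R}^{n-1}$. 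This is your $\text{M}$, with no rank-one correction and no need to restrict to $\mathbf{1}^\perp$. The paper does the same computation in block form: it writes the left side as $0\oplus\text{L}_1^{s}$ plus a weighted star Laplacian with edge weights $w_{0i}-d_i/2$, and compares edge by edge with $a\text{L}_{\G_0}$.

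The cross term you worry about comes from the order $\text{L}_{\G}\text{L}_{\G_0}$. On all of $\mathbb{R}^n$ that form is affine in $x_0$ for fixed $z$, with coefficient $((\mathbf{1}^\top b)\mathbf{1}+b-\text{D})^\top z$. This cannot vanish identically, since $\mathbf{1}^\top\text{D}=0$ and $b\ge0$, $b\ne0$, so that version is infeasible for the star. On $\mathbf{1}^\perp$ it carries the extra term $(n+1)(\mathbf{1}^\top y)(b^\top y)-a(n+1)(\mathbf{1}^\top y)^2$. That term is indefinite and grows with $b$: for $b=\beta e_1$ and $y=e_1-2e_2$ the $b$-terms total $-n\beta$. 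So ``take $\min b$ large'' cannot absorb it.

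Step~3 also does not close as written.
\begin{enumerate}
\item $\text{diag}(b)$ vanishes at every SCC vertex that receives no arc from $v_0$, and the arc set is given. So the option ``$\min_i b_i$ large'' is unavailable in general.
\item The cycle-space fallback can work, but not in your order. Testing with $\mathbf{1}$ gives $\lambda_{\min}(\tfrac12(\text{L}_1+\text{L}_1^\top)+\text{diag}(b))\le\frac{1}{n-1}\sum_i b_i$, however large $\Delta w$ is. So you must first fix $b$ with $\sum_i b_i>(n-1)a$, and only then take $\Delta w$ large relative to $b$ and D. Choosing $\Delta w$ against $\max b$ and enlarging $b$ afterwards is circular.
\end{enumerate}

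Once the reduction is exact, none of this is needed. You already have the identity $\tfrac12(\text{L}_1+\text{L}_1^\top)=\text{L}_1^{s}-\tfrac12\text{diag}(\text{D})$, and $\text{L}_1^{s}\succeq 0$. So it suffices that $b_i-d_i/2\ge a$ for every $i$. Choose the root among the vertices fed by $v_0$, and scale $\text{W}^{-}$ by $2a$ rather than $a$, so that $d_i\le-2a$ at every non-root vertex. Then any positive $b_i$ works off the root, and only $b_r\ge a+d_r/2$ is required.

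Your observation that $-d_i/2\ge a/2$ is not enough at unfed vertices is correct. The paper's proof states conditions only on the $w_{0,i}$, and tacitly needs $-d_i/2\ge a$ at vertices receiving no arc from $v_0$. Its scaling by $a$ does not guarantee this.
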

\begin{proof}
Let $\text{L}_{\G}$ denote the Laplacian matrix of the weighted digraph described in the statement, and let $\text{L}_{\G_0}$ represent the Laplacian of an unweighted star graph with the same vertex set as $\G$ i.e., $\VV(\G)=\VV(\G_0)$. The central vertex of $\G_0$ is assigned the same label as the source vertex of $\G$.

Choose an arbitrary vertex that is adjacent to $v_{0}$ in $\G$ as $v_{1}$.  
By Corollary~\ref{thm:negative imbalance arc-weight vector for strongly connected digraphs}, the non-trivial SCC of $\G$ admits a negative imbalance vector X$^{-}$ such that $v_1$ has a positive imbalance.  
Let L$_{\text{non-triv}}$ denote the Laplacian matrix of this non-trivial SCC, and define its imbalance vector as D$ = [\,d_1, \dots, d_{n-1}\,]^{\top}$.  
Denote the set of outgoing arcs from $v_0$ as $\{e_{0,1}, \dots, e_{0,m_1}\}$.  
Correspondingly, define the vector W$_{\text{triv}} = [\,w_{0,1},\dots,w_{0,m_1},\,\boldsymbol{0}_{n-m_1-1}^{\top}]^{\top}$, which collects the weights of the outgoing arcs from the source vertex followed by zeros for the remaining arcs.
The inequality in \eqref{eq:Graph inequality} can be equivalently expressed in a symmetric form, given by:
\begin{equation}
\frac{1}{2}\left(\text{L}_{\G_0}\text{L}_{\G} + \text{L}_{\G}^\top \text{L}_{\G_0}\right) \succeq a\,\text{L}_{\G_0}.
\label{eq:symmetric-form of Graph inequality}
\end{equation}
To simplify the left-hand side of \eqref{eq:symmetric-form of Graph inequality}, we add and subtract the diagonal matrix 
$\mathrm{diag}\!\left([0\;\;\frac{1}{2}\text{D}]\right)$, yielding the decomposition:
\begin{equation}
\begin{bmatrix}
0 & \mathbf{0}_{n-1}^\top \\[2pt]
\mathbf{0}_{n-1} &  \text{L}^{\text{sym}}_{\text{non-triv}}
\end{bmatrix}
+
\begin{bmatrix}
d_0 & \dfrac{1}{2}\text{D}^\top - \E_{\text{triv}}^\top \\[2pt]
\dfrac{1}{2}\text{D} - \E_{\text{triv}} & \text{diag}(\E_{\text{triv}} - \dfrac{1}{2}\text{D})
\end{bmatrix},
\label{eq:split inequality}
\end{equation}
where 
\[
\text{L}^{\text{sym}}_{\text{non-triv}} 
= \frac{1}{2}\!\left(\text{L}_{\text{non-triv}} + \text{L}_{\text{non-triv}}^\top + \text{diag}(\text{D})\right)
\]
The first term in \eqref{eq:split inequality} , has zero row sum, and each of its diagonal entries is larger than the sum of the absolute values of the non-diagonal elements in the corresponding row.  
This property follows from the addition of the term $\mathrm{diag}\!\left([0\;\;\frac{1}{2}\text{D}]\right)$.  
According to the Gershgorin circle theorem, this matrix is positive semi-definite.  
So, we can focus solely on the second term in \eqref{eq:split inequality}.
The second matrix term in \eqref{eq:split inequality} :
\begin{enumerate}
    \item is symmetric with non-negative off-diagonal entries,
    \item has the same zero non-zero pattern as the Laplacian of an undirected star graph, and
    \item its row sum is zero.
\end{enumerate}
These properties imply that the second matrix on the left-hand side can be interpreted as the Laplacian of an undirected weighted star graph. Since the matrix on the right-hand side of the inequality is scaled by \( a \), it corresponds to a weighted star graph in which each edge has a uniform weight \( a \). Therefore, it suffices to ensure that every edge weight of the star graph represented on the left-hand side is greater than or equal to \( a \).
Since all imbalance terms except $d_{1}$ are negative, we get the conditions :
\begin{equation}
    \begin{aligned}
        w_{0,1}& \ge \dfrac{\text{d}_1}{2} + a, \\[3pt]
        w_{0,i}&\ge a \quad \forall\, 1<i\le m_1.
    \end{aligned}
    \label{eq:interSCCsynchronizationconditions}
\end{equation}
Since the arc weights \( w_{0,i} \) are not subject to any upper bounds, a feasible solution always exists. Consequently, the required coupling condition for synchronization can always be achieved.
\end{proof}
Thus, assigning the inter-SCC arc weights to satisfy the above inequalities ensures synchronization of the SCC under consideration.
\begin{algorithm}[h]

\caption{Inter-SCC Arc Weight Allocation}
\label{alg:inter-scc-step}
\begin{algorithmic}
\State \textbf{Input:} $\G$ (with all the intra-SCC arc weights allocated) and $a>0$ (system parameter).
\State \textbf{Output:} $\G$ with inter-SCC arc weights assigned.
\State \textbf{Step 1:} Extract a non-source SCC - $\G_{k}$ that is unprocessed .
\State \textbf{Step 2:} Compute vertex imbalances of $\G_{k}$.
\State \textbf{Step 3:} Let $v_1$ be the vertex with positive vertex imbalance.
\State \textbf{Step 4:} Choose a vertex $v \in \G_{k}$ with incoming inter-SCC arcs.
\State \textbf{Step 5:} Let $I(v)$ be its incoming inter-SCC arcs in $\G$.
\State \textbf{Step 6:}
\textbf{if} $v = v_1$, distribute $a + \dfrac{d_1}{2}$ evenly among $I(v)$; \\
\hspace{1.15cm}\textbf{else} distribute $a$ evenly among $I(v)$.
\State \textbf{Step 7:} \textbf{if} vertices with unallocated inter-SCC incoming arcs remain in $S$, return to Step 5.
\State \textbf{Step 8:} \textbf{if} unprocessed non-source SCCs remain, return to Step 1.
\State \textbf{Step 9:} Return $\G$.
\end{algorithmic}
\end{algorithm}

The following are the key insights of theorem~\ref{thm:Arc weights for digraphs with trivial SCCs} :
\begin{itemize}
    \item Inter-SCC coupling strengths are determined by the choice of root vertex within each SCC.
    \item Vertices with higher internal connectivity (within the SCC) require lower coupling strengths.
    \item When multiple candidates exist, prioritizing those receiving more incoming inter-SCC arcs further reduces required coupling strengths.
    \item \textbf{Practical heuristic:} Select the root vertex that minimizes the ratio of sum of directed path lengths to the number of incoming inter-SCC arcs, thereby optimizing coupling-strength allocation.
\end{itemize}

\subsection{Example}
We consider a digraph with six vertices (Fig.~\ref{fig:connectivity_digraph}) to illustrate the proposed coupling strength allocation procedure. The digraph $\G$ consists of two SCCs: the source SCC $\G^{1}$ (blue vertices) and the non-source SCC $\G^{2}$ (white vertices). Condensing the source SCC $\G^{1}$ into a single vertex $v_0$ yields the reduced digraph $\tilde{\G}^{2}$.
\begin{figure}[h]
\centering
\begin{tikzpicture}[>=stealth,scale=0.3,
    redvertex/.style={circle, draw, fill=none, pattern color=black!50!red, minimum size=14pt, inner sep=0pt},
    bluevertex/.style={circle, draw, fill=blue!30, minimum size=14pt, inner sep=0pt},
    greenvertex/.style={circle, draw, fill=green!30, pattern=north west lines, pattern color=black!40!green, minimum size=14pt, inner sep=0pt},
    arrow/.style={thick},
    node distance=5mm]

\begin{scope}[shift={(-8,10)}]
    \foreach \i/\angle/\style in {
        1/90/redvertex,2/30/redvertex,3/-30/redvertex,
        4/-90/bluevertex,5/-150/bluevertex,6/150/bluevertex}
        \node[style=\style] (g1v\i) at (\angle:4cm) {\i};

    \node[above=of g1v1,yshift=-0.5cm] {$\G$};

    \draw[->,arrow] (g1v1) -- (g1v2);
    \draw[->,arrow] (g1v2) -- (g1v3);
    \draw[->,arrow] (g1v3) -- (g1v1);
    \draw[->,arrow] (g1v4) -- (g1v1);
    \draw[->,arrow] (g1v4) -- (g1v2);
    \draw[->,arrow] (g1v5) -- (g1v1);
    \draw[->,arrow] (g1v4) -- (g1v5);
    \draw[->,arrow] (g1v5) -- (g1v6);
    \draw[->,arrow] (g1v6) -- (g1v4);
\end{scope}

\begin{scope}[shift={(8,10)}]
    \foreach \i/\angle/\style in {
        1/90/redvertex,2/30/redvertex,3/-30/redvertex,0/-150/bluevertex}
        \node[style=\style] (g4v\i) at (\angle:4cm) {\i};
    \node[above=of g4v1,yshift=-0.5cm] {$\tilde{\G}^{2}$};    
    \draw[->,arrow] (g4v1) -- (g4v2);
    \draw[->,arrow] (g4v2) -- (g4v3);
    \draw[->,arrow] (g4v3) -- (g4v1);
    \draw[->,arrow] (g4v0) -- (g4v2);
    \draw[->,arrow] (g4v0) -- (g4v1);
\end{scope}
\end{tikzpicture}
\caption{Example digraph consisting of two SCCs, corresponding to the induced digraphs formed by the blue and white vertex sets. The digraph on the right depicts the reduced digraph obtained by contracting the source SCC into a single vertex.}
\label{fig:connectivity_digraph}
\end{figure}

\subsubsection*{A. Intra-SCC Allocation}

For the source SCC $\G^{1}$, we fix the vertex 5 as the root vertex and compute the directed paths to every other vertex. The arcs along these directed paths are allocated decreasing weights and the corresponding arc-weight vectors are computed. Each such vector is scaled by $a$ and summed to obtain a negative imbalance arc-weight vector for $\G^{1}$ which is $\E^{-} = a[\,3,\,1,\,0\,]^\top$. Then, we find the directed cycles which in this case is a single cycle (\ref{tab:Gk1}). Since this is a source SCC, we find $\Delta w$ ($8a$) and scale the vector ($\E^{0} =a[\,8,\,8,\,8\,]^\top$ ) corresponding to the cycle to satisfy the synchronization condition. These two vectors are summed to compute the arc-weight vector for the source SCC i.e., $\E_{\text{sync}} = \E^{0} + \E^{-} = a[\,11,\,9,\,8\,]^\top.$

\begin{table}[h]
\caption{Directed cycles, directed paths and corresponding vectors for SCCs of $\G$}
\label{tab:Gk1}
\begin{tabular}{llll}
\hline
\textbf{SCC} & \textbf{Category} & \textbf{Cycle/Path} & \textbf{Vector}\\ \hline
\multirow{3}{*}{$\G^{1}$}
 & Directed Path  & $5 \rightarrow 6$ & $\left[\,1,\,0,\,0\,\right]^{\top}$ \\
 & Directed Path  & $5 \rightarrow 6 \rightarrow 4$ & $\left[\,2,\,1,\,0\,\right]^{\top}$ \\
 & Directed Cycle & $5 \rightarrow 6 \rightarrow 4 \rightarrow 5$ & $\left[\,1,\,1,\,1\,\right]^{\top}$ \\\hline
\multirow{3}{*}{$\G^{2}$}
 & Directed Path  & $1 \rightarrow 2$ & $\left[\,1,\,0,\,0\,\right]^{\top}$ \\
 & Directed Path  & $1 \rightarrow 2 \rightarrow 3$ & $\left[\,2,\,1,\,0\,\right]^{\top}$ \\
 & Directed Cycle & $1 \rightarrow 2 \rightarrow 3 \rightarrow 1$ & $\left[\,1,\,1,\,1\,\right]^{\top}$\\\hline
\end{tabular}
\end{table}

For the \textbf{non-source SCC} $\G^{2}$, for the non-trivial SCC $\G^{2}$, we fix vertex 1 as the root vertex and compute $\E^{-} = a[\,3,\,1,\,0\,]^\top$ as before and scale it by $a$ but not $\qquad
X^{0} = [\,1,\,1,\,1\,]^\top.$ since its only purpose is to ensure that arc $(3,1)$ has a nonzero weight. We add $X^{0}$ to $\E^{-}$, with zero-padding applied to arcs originating from vertex $v_0$. Excluding inter-SCC arcs, the arc-weight vector is $[\,0,\,0,\,3a+1,\,a+1,\,1\,]^\top$.

\subsubsection*{B. Inter-SCC Allocation}
We now determine the weights of arcs connecting the condensed vertex $v_0$ to $\G^{2}$. By theorem~\ref{thm:Arc weights for digraphs with trivial SCCs}, the required weights for arcs incident on $v_1$ are
\[
w_{01} = \frac{3a}{2} + a = 2.5a,
\qquad
w_{02} = a.
\]

Then, the weight $w_{01} = 2.5a$ assigned is equally divided between arcs $(4,1)$ and $(5,1)$.

This gives us the final arc-weight vector for $\G^{2}$ to be $[\,2.5a,\,a,\,3a+1,\,a+1,\,1\,]^\top.$

\begin{figure*}[h]
\centering
\includegraphics[width=0.8\linewidth]{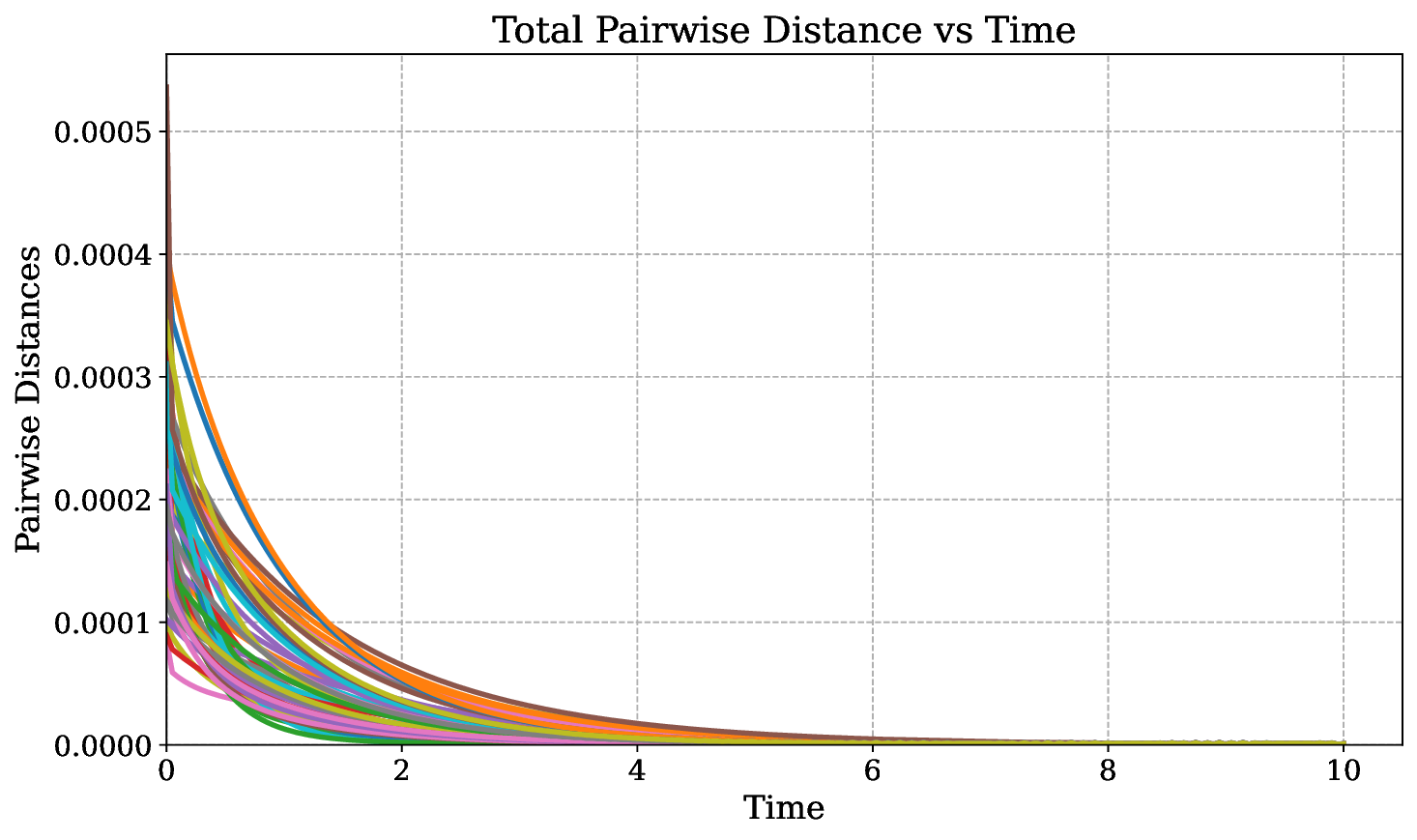}
\caption{Pairwise distances between all state pairs in a 100-node coupled Lorenz network. All distances rapidly converge toward zero.}
\label{fig:pairwise_distances}
\end{figure*}

\section{Simulation and Results}
We chose a network comprising of 100 identical Lorenz attractors with the following model :
\begin{equation}
  \dot{x} = \sigma (y - x), \;\dot{y} = x (r - z) - y,\; \dot{z} = x y - b z
\end{equation}
Here, $\sigma = 10$, $r = 28$, $b = 8/3$. We computed the system parameter $a$ using the formula $a = \dfrac{b(b+1)(r+\sigma)^{2}}{16(b-1)} -\sigma$ as directed in \cite{liu2013coupling}. Initial conditions were randomly sampled from a uniform distribution.

To verify that the selected coupling strengths lead to synchronization, we first extracted a directed spanning tree from the connectivity digraph of the network. We then computed the pairwise state differences for the dynamical systems corresponding to the vertices that were adjacent in this tree. Since a directed spanning tree guarantees reachability of all nodes from the root, convergence of the state differences along all tree arcs to zero implies global synchronization of the network. 

As illustrated in Fig.~\ref{fig:pairwise_distances}, the state differences corresponding to all spanning-tree edges decay asymptotically to zero, thereby confirming that the entire network achieves synchronization.

\subsection{Complexity}
For an SCC with $n'$ vertices and $m'$ arcs, the procedure involves computing $n'-1$ directed paths together with determining a directed-ear basis. This task has the same computational complexity as constructing a cycle basis via a spanning tree, which, as discussed in \cite{ryan1981comparison}, has complexity $O(n'\gamma)$ where $\gamma = m' - n' + 1$. 

In the worst-case scenario of a complete digraph, where $m '= \frac{n'(n'-1)}{2}$, we have $\gamma = O(n'^2)$, leading to an overall complexity of $O(n^3)$ for the connectivity digraph as a whole.

\section{Conclusion}
In this work, we presented a graph-theoretic framework for allocating coupling-strengths in a diffusively coupled dynamical system. We eliminated the need for inequality solvers and detailed system models and built on standard graph-traversal operations that are well established and computationally efficient. Since our method provides a principled approach for dealing with dynamical networks using primarily connectivity information, it is particularly useful for networks where obtaining precise system models may be impractical.

\medskip

Nevertheless, the scope of the present framework is defined by certain structural and computational assumptions. The proposed algorithm assumes that the connectivity digraph contains a directed spanning tree; otherwise, the digraph contains multiple source SCCs. In such cases, one possible remedy is to select one source SCC and connect it to the others through external inputs to enforce the required connectivity structure. The framework can, in principle, be extended to slowly time-varying topologies; however, the current implementation incurs a computational complexity of $O(n^3)$ per SCC, motivating the development of more efficient linear or near-linear algorithms and incremental update schemes. Finally, since the required coupling strengths scale with the system parameter $a$, budget constraints on coupling strengths may become restrictive, leading to the problem of selecting and strengthening arcs to synchronize the largest possible subset of oscillators.

\section*{Acknowledgment}
The authors acknowledge the use of language assistance tools, including ChatGPT, Gemini, DeepSeek, and Writefull for paraphrasing suggestions and grammar refinement during manuscript preparation. These tools were utilized solely for linguistic enhancement and not for the generation of any technical or scientific content.

The authors also thank Varatharajan Meenakshi Sundaram, a research scholar at the department of electrical engineering, Indian Institute of Technology Madras, for his valuable feedback and guidance. His careful review for technical errors and constructive suggestions significantly improved the quality and clarity of the manuscript.

\bibliographystyle{elsarticle-num} 
\bibliography{References_clean}

@article{liu2015synchronization,
  title={Synchronization in directed complex networks using graph comparison tools},
  author={Liu, Hui and Cao, Ming and Wu, Chai Wah and Lu, Jun-An and Tse, Chi K},
  journal={IEEE Transactions on Circuits and Systems I: Regular Papers},
  volume={62},
  number={4},
  pages={1185--1194},
  year={2015},
  publisher={IEEE}
}

@article{liu2013coupling,
  title={Coupling strength allocation for synchronization in complex networks using spectral graph theory},
  author={Liu, Hui and Cao, Ming and Wu, Chai Wah},
  journal={IEEE Transactions on Circuits and Systems I: Regular Papers},
  volume={61},
  number={5},
  pages={1520--1530},
  year={2013},
  publisher={IEEE}
}

@book{wu2007synchronization,
  title={Synchronization in complex networks of nonlinear dynamical systems},
  author={Wu, Chai Wah},
  year={2007},
  publisher={World scientific}
}

@article{belykh2004connection,
  title={Connection graph stability method for synchronized coupled chaotic systems},
  author={Belykh, Vladimir N and Belykh, Igor V and Hasler, Martin},
  journal={Physica D: nonlinear phenomena},
  volume={195},
  number={1-2},
  pages={159--187},
  year={2004},
  publisher={Elsevier}
}

@article{loebl2001some,
  title={Some remarks on cycles in graphs and digraphs},
  author={Loebl, Martin and Matamala, Martin},
  journal={Discrete mathematics},
  volume={233},
  number={1-3},
  pages={175--182},
  year={2001},
  publisher={Elsevier}
}

@article{ryan1981comparison,
  title={A comparison of three algorithms for finding fundamental cycles in a directed graph},
  author={Ryan, Doris R and Chen, Stephen},
  journal={Networks},
  volume={11},
  number={1},
  pages={1--12},
  year={1981},
  publisher={Wiley Online Library}
}

@article{mochizuki2013dynamics,
  title={Dynamics and control at feedback vertex sets. II: A faithful monitor to determine the diversity of molecular activities in regulatory networks},
  author={Mochizuki, Atsushi and Fiedler, Bernold and Kurosawa, Gen and Saito, Daisuke},
  journal={Journal of theoretical biology},
  volume={335},
  pages={130--146},
  year={2013},
  publisher={Elsevier}
}

@article{fiedler2013dynamics,
  title={Dynamics and control at feedback vertex sets. I: Informative and determining nodes in regulatory networks},
  author={Fiedler, Bernold and Mochizuki, Atsushi and Kurosawa, Gen and Saito, Daisuke},
  journal={Journal of Dynamics and Differential Equations},
  volume={25},
  number={3},
  pages={563--604},
  year={2013},
  publisher={Springer}
}

@article{belykh2006generalized,
  title={Generalized connection graph method for synchronization in asymmetrical networks},
  author={Belykh, Igor and Belykh, Vladimir and Hasler, Martin},
  journal={Physica D: Nonlinear Phenomena},
  volume={224},
  number={1-2},
  pages={42--51},
  year={2006},
  publisher={Elsevier}
}

@article{mirsky2009model,
  title={A model of the cell-autonomous mammalian circadian clock},
  author={Mirsky, Henry P and Liu, Andrew C and Welsh, David K and Kay, Steve A and Doyle III, Francis J},
  journal={Proceedings of the National Academy of Sciences},
  volume={106},
  number={27},
  pages={11107--11112},
  year={2009},
  publisher={National Academy of Sciences}
}

@article{Oda2005,
  author  = {Oda, K. and Matsuoka, Y. and Funahashi, A. and Kitano, H.},
  title   = {A comprehensive pathway map of epidermal growth factor receptor signaling},
  journal = {Mol. Syst. Biol.},
  year    = {2005},
  volume  = {1},
  pages   = {2005.0010},
  doi     = {10.1038/msb4100014},
  pmid    = {16729045},
  pmcid   = {PMC1681468}
}

@article{delellis2010quad,
  title={On QUAD, Lipschitz, and contracting vector fields for consensus and synchronization of networks},
  author={DeLellis, Pietro and di Bernardo, Mario and Russo, Giovanni},
  journal={IEEE Transactions on Circuits and Systems I: Regular Papers},
  volume={58},
  number={3},
  pages={576--583},
  year={2010},
  publisher={IEEE}
}

@article{russo2009contraction,
  title={Contraction theory and master stability function: Linking two approaches to study synchronization of complex networks},
  author={Russo, Giovanni and Di Bernardo, Mario},
  journal={IEEE Transactions on Circuits and Systems II: Express Briefs},
  volume={56},
  number={2},
  pages={177--181},
  year={2009},
  publisher={IEEE}
}

\end{document}